\begin{document}

\title{Finger Indexed Sets: New Approaches\footnote{Preliminary version of this paper was presented in Advances in  Informatics,  LNCS 3746, pp 134-144, Volos, Greece, November 2005
}}

\author{{\bfseries Spyros Sioutas}\\
   (Ionian University Corfu, Greece\\
   sioutas@ionio.gr)\\
}
\maketitle

\begin{abstract}
In the particular case we have insertions/deletions at the tail of a given set S of $n$ one-dimensional elements, we present a simpler and more concrete algorithm than that presented in [Anderson, 2007] achieving the same (but also amortized) upper bound of $O(\sqrt{logd/loglogd})$ for finger searching queries, where $d$ is the number of sorted keys between the finger element and the target element we are looking for. Furthermore, in general case we have insertions/deletions anywhere we present a new randomized algorithm achieving the same expected time bounds. Even the new solutions achieve the optimal bounds in amortized or expected case, the advantage of simplicity is of great importance due to practical merits we gain.  
\end{abstract}

\begin{keywords}
Dictionary Problem, Algorithms and Data Structures, finger searching, Nested Balanced Distributed Trees, Randomization, Combinatorial Games
\end{keywords}

\begin{category}
E.1, E.5
\end{category}

\section{Introduction}

By finger search we mean that we can have a \emph{finger} pointing at a sorted key $x$ when searching for a key $y$. Here a finger is just a reference returned to the user when $x$ is inserted or searched for. The goal is to do better if the number $d$ of sorted keys between $x$ and $y$ is small. Also, we have finger updates, where for deletions one has a finger on the key to be deleted, and for insertions, one has a finger to the key after which the new key is to be inserted. 
In the comparison-based model of computation Ramman [Raman, 1992] has provided optimal bounds, supporting finger searches in  $O(logd)$ time while supporting finger updates in constant time. On the pointer machine, Brodal et al. [Brodal, 2003] have shown how to support finger searches in $O(logd)$ time and finger updates in constant time. Finally, Anderson and Thorup presented in [Anderson, 2007] optimal bounds on the RAM; namely $O(\sqrt{logd/loglogd})$ for finger search with constant finger updates in worst-case. This optimal solution is also very complicated and as a consequence not at all practical. 

In this paper, assuming that the insert/delete operations occur at the tail of set S, we present a new algorithm based on an implicit Nested Balanced Distributed Tree (BDT), which handles finger-searching queries in optimal amortized (and not worst-case) time ($O(\sqrt{logd/loglogd})$)but also in a simpler manner than that presented in [Anderson, 2007]. Consequently, our method is much easier to be implemented.

In general case we have insertions/deletions anywhere we present a new simple randomized algorithm based on application of oblivious on-line simple pebble games [Raman, 1992] upon a new 2-level hybrid data structure where the top-level structure is a Level-Linked Exponential search tree [Beam, 2002] and the bottom level are buckets of sub-logarithmic size. Our new randomized method results in the following complexities: $O(\sqrt{logd/loglogd})$ and $O(1)$ in expected case for finger searching and update queries respectively. 

In the following section we review the preliminary data structures. In section 3 we review in detail an extended outline of our new solution in special case we have insertions/deletions at the tail of given set. In section 4 we study the general case we have insertions/deletions anywhere constructing a randomized algorithm achieving the same optimal expected time bounds. In section 5 we conclude.

\section{Preliminary Data Structures }

\subsection{Precomputation Tables}

Ajtai, Fredman and Komlos have shown in [Ajtai, 1984] that subsets of the integers $\left\{1, \ldots, n\right\} $ of size polylogarithmic in $n$ can be maintained in constant time so that predecessor queries (find the largest $i \in S$ such that $i \leq x$) can be performed in constant time. In fact, their result is in the cell probe model of computation; however, on a logarithmic word size RAM their functions can be represented by tables that can be incrementally precomputed at a cost of $O(1)$ worst-case time and space per operation. The data structure occupies space that is linear in the size of the subset.

\subsection{Fusion Tree}

At STOC'90, Fredman and Willard [Fredman, 1990] surpassed the comparison-based lower bounds for sorting and searching using the features in a standard imperative programming languages such as $C$. Their key result was an $O(logn/loglogn)$ time bound for deterministic searching in linear space. The time bounds for dynamic searching include both searching and updates. Since then much effort has been spent on finding the inherent complexity of fundamental searching problems.

\subsection{Amortized Exponential Search Tree}

In 1996, Anderson [Anderson, 1996] introduced exponential search trees as a general technique reducing the problem of searching a dynamic set in linear space to the problem of creating a search structure for a static set in polynomial time and space. The search time for the static set essentially becomes the amortized search time in the dynamic set. From Fredman and Willard [Fredman, 1990], he got a static structure with $O(\sqrt{logn})$ search time, and thus he obtained an $O(\sqrt{logn})$ time bound for dynamic searching in linear space. Obviously the cost for searching is worst-case while the cost for updates is amortized.

\subsection{Beam-Fich (BF) structure}

In 2002 Beame and Fich [Beam, 2002] showed that $O(\sqrt{logn/loglogn})$ is the exact worst-case complexity of searching static set using polynomial space. Using the above mentioned exponential search trees, they obtained a fully dynamic deterministic search structure supporting search, insert, and delete in $O(\sqrt{logn/loglogn})$ amortized time. The BF structure can use randomization (for rehashing) in order to achieve $O(loglogN)$ expected update time, where $N$ is the universe. The amortized operations are very simple to be implemented in a standard imperative programming language such as $C$ or $C++$.

\subsection{Worst - Case Exponential Search Tree}

Finally, in 2007, Anderson and Thorup [Anderson, 2007] developed a worst-case version of exponential search trees, giving an optimal $O(\sqrt{logn/loglogn})$ worst-case time bound for dynamic searching. They also extended the above result to finger searching problem, achieving the same optimal time bound $O(\sqrt{logd/loglogd})$. The rebuilding operations are also very complicated and very difficult to be implemented in a standard imperative programming language such as $C$ or $C++$.

\section{A special case of finger searching}

We use as a base structure a Balanced Distribution Tree (BDT). In such a tree the degree of the nodes at level $i$ is defined to be $d(i)=t(i)$, where $t(i)$ indicates the number of nodes present at level $i$. This is required to hold for $i\geq1$, while $d(0)=2$ and $t(0)=1$. It is easy to see that we also have $t(i)=t(i-1)*d(i-1)$, so putting together the various components, we can solve the recurrence and obtain for $i\geq1$: $d(i)=2^{2^{i-1}}$, $t(i)=2^{2^{i-1}}$. One of the merits of this tree is that its height is $O(loglogn)$, where $n$ is the number of elements stored in it. 

We consider the case we have only insertions/deletions at the end of the set $S$, for example $insert(y)$ or $delete(y)$ such 
as $y > maximum \left\{x_i\in S\right\}$, $1\leq i \leq n$ or $y = maximum \left\{x_i\in S\right\}$, $1\leq i \leq n$ respectively. We build our structure by repeating the same kind of BDT tree-structure in each group of nodes having the same ancestor, and doing this recursively.

This structure may be imposed through another set of pointers (it helps to think of these as different color pointers). The innermost level of nesting will be characterized by having a tree-structure, in which no more than two nodes share the same direct ancestor. Figure 1 illustrates a simple example (for the sake of clarity we have omitted from the picture the links between nodes with the same ancestor).

\begin{figure}[h]
	\begin{center}
		\includegraphics[scale=0.60]{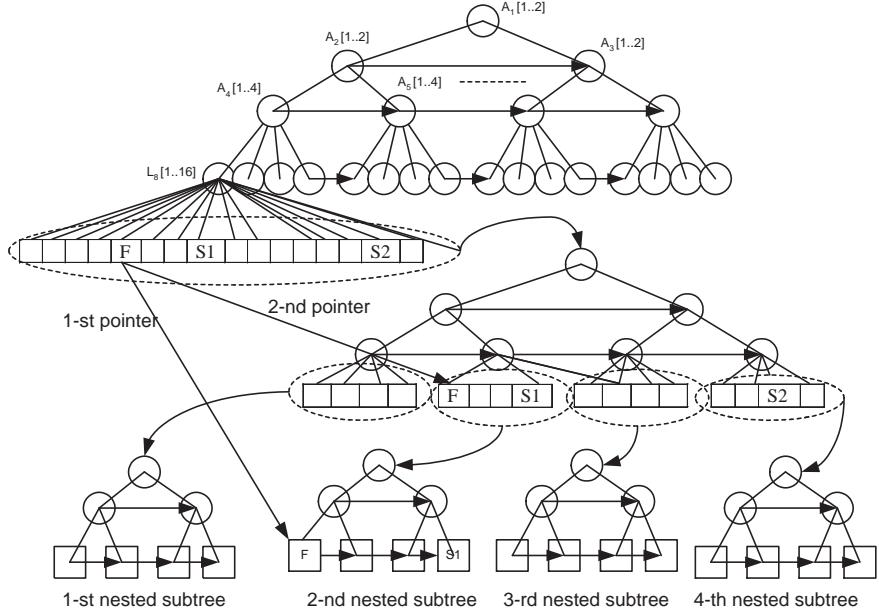}
	\end{center}

	\caption{The Level-linked leaf-oriented nested BDT tree}
	\label{fig:bdt}
\end{figure}

Thus, multiple independent tree structures are imposed on the collection of nodes inserted. Each element inserted contains pointers to its representatives in each of the trees it belongs. 

We need now to determine what will be the maximum number of nesting trees that can occur for $n$ elements. Observe that the maximum number of nodes with the same direct ancestor is $d(h-1)$. Would it be possible for a second level tree to have the same (or bigger) depth than the outermost one? 
This would imply that $\sum_{j=0}^{h-1} t(j) > d(h-1)$

As otherwise we would be able to fit all the $d(h-1)$ elements within the first $h-1$ levels. But we need to remember that $d(i)=t(i)$, thus $d(h-1)+ \sum_{j=0}^{h-1} d(j) < d(h-1)$

This would imply that the number of nodes in the first $h-2$ levels is negative, clearly impossible. Thus, the second level tree will have depth strictly lower than the depth of the outermost tree. As a consequence, the maximum number of nesting of trees $k$ that we can have is itself $O(loglogn)$.

The basic intuition behind the use of BDT tree, is the reduction of the whole set of $O(n)$ elements to the appropriate subset (nested subtree of figure 1) of $O(d)$ elements. Then by applying in this subset the simple amortized solution for general searching problem presented in [Beam, 2002], we achieve an optimal amortized solution for finger searching problem. Despite the fact that the searching time complexity of our structure is amortized and not worst-case as it happens in [Anderson, 2007] solution, it's simplicity also is of great importance since we can gain many practical merits.

We equip each node(leaf) of level $i$, say $W_i$, with a searching information array $A[1\ldots d(i)]$ ($L[1\ldots d(i)]$), where $d(i)$ is the size of the array at level $i$. We organize the elements of the arrays above with the structure of Beam-Fich presented in [Beam, 2002], let's call it $BF(W_i)$. We also equip each leaf with $k=O(loglogn)$ pointers to its respective copies at nested levels (see in Figure 1 the pointers from leaf $f$). Each element of $S$ is stored at most in $O(loglogn)$ levels, so the space of structure is non-linear $O(nloglogn)$ and the update (insertion/deletion) operation is performed in $O(loglogn)$ worst-case time.
In order to achieve linear space and $O(1)$ worst-case update time we use the bucketing technique. The essence of the bucketing method is to get the best features of these two different structures by combining them into a two-level structure. The data to be stored is partitioned into buckets and the chosen data structure for the representation of each individual bucket is different from the representation of the top-level data structure, representing the collection of buckets (for similar applications of this data structuring paradigm see also [Overmars, 1982], [Tsakalidis, 1984], [Raman, 1992]). 
More specifically, we partition the elements of the set into contiguous buckets of size $O(loglogn)$, with each bucket being represented by the linear list scheme and we store the first element of each bucket in the leaf-oriented nested balanced distributed tree scheme as its representative. When an item is inserted it is appended to the tail of the list implementing the last incomplete bucket. If the size of this bucket becomes $O(loglogn)$, then a new bucket is created containing only the inserted element, and we spend further $O(loglogn)$ time, in order to insert this element into the top-level structure. We have a total of $O(n/loglogn)$ representatives, each of which must be inserted at most in $O(loglog(n/loglogn))=O(loglogn)$ nested levels. Furthermore, at each of these levels  (leaf-levels) we must update the respective $BF$ structures in $O(loglog(d(n_i)))$ worst-case time  respectively, where $d(n_i)$ is the size of the respective array $L$ , at the $n_{i}^{th}$, $1\leq n_i\leq O(loglogn)$, level of nesting. 
More precisely the dynamic $BF$ structure requires  amortized update time but this special semi-dynamic case of updating implies the following: 

\begin{enumerate}
	\item If $n<2^{log^{2}logN/logloglogN}$ then the BF structure has only one part, the simple static data structure presented in[Anderson, 1996]. In this case we must execute a number of partial rebuilding operations at the right subtrees only of the whole structure, ensuring always that these subtrees have size at least $\frac{n}{2*\left\lceil n^{4/5}\right\rfloor}\pm 1$  and at most $\frac{2*n}{\left\lceil n^{4/5}\right\rfloor}\pm 1$, as follows. When an update causes a right-subtree to violate this condition, we examine the sum of the sizes of that subtree and its immediate neighbor which is always a full subtree with  $\frac{2*n}{\left\lceil n^{4/5}\right\rfloor}\pm 1$ elements, transferring the proper number of elements from the full neighbor node to the right-most one which we try to reconstruct. Until the next reconstruction we have all the time to spread incrementally the reconstruction cost, achieving $O(1)$ worst-case time.  So, for the $O(loglogn)$ levels of the tree depicted in figure 1 the total amount of update time becomes $O(loglogn)$ in worst-case.
	\item If $n\geq2^{log^{2}logN/logloglogN}$ or $\sqrt{logn/loglogn}\geq loglogN/(\sqrt{2}logloglogN)$ the $BF$ structure consists of two parts. The first part is a $x-fast$ trie of Willard [Willard, 1983] with branching factor $2k$ and depth $u$ which organizes the top $1+2*\left\lceil logu \right\rceil$ levels for a set of $s\leq n$ strings with length $u$, ($u=2(loglogN)/(logloglogN)\Rightarrow \sqrt{n}\geq u^u \geq logN $) over the alphabet $[0,2k-1]$. Intuitively the $x-fast$ trie reduces the predecessor and generally the dictionary problem from a universe of size $2^k$ to a subproblem with universe of size $2^b$, where $k=(logN)/2^{1+2 \left\lceil logu \right\rceil}\leq (logN)/2u^2 < u^{u-2}$, $\left\lfloor 2(u-1)^2-1\right\rfloor k < logN \leq b$  and $b\geq \left\lfloor 2(u-1)^2-1\right\rfloor k$ . The second part consists of the appropriate hash functions constructed for each resulting subproblem. When an insertion/deletion is occurred we have to insert/delete the appropriate hashed values. Since we investigate the special case where the updates occur at the tail only, the update of the hash functions described above can be done in $O(1)$ worst-case time.  So, for the $O(loglogn)$ levels of  the tree depicted in figure 1 the total amount of update time becomes again $O(loglogn)$ in worst-case.

\end{enumerate}

Due to the fact that $d(n_{i+1})=\sqrt{d(n_i)}$ at level $i$, the total amount of update operations at the appropriate $BF$ structures can be expressed as follows:

\small$O(loglog(d(n_1)))+O(loglog(\sqrt{d(n_1)}))+O(loglog(\sqrt{\sqrt{d(n_1)}}))+\ldots = O(loglogn)$
\normalsize
Spreading the total $O(loglogn)$ insertion cost, over the $O(loglogn)$ size of each bucket, we achieve an $O(1)$ amortized insertion cost. For the same reason as above it is easy to prove that the whole space is linear. We eliminate the amortization by spreading the time cost for the insertion of the representative over the next $O(loglogn)$ updates of bucket. Due to the fact that we have no a priory knowledge of $n$, we use the global rebuilding technique [Overmars, 1981] in order to retain the buckets in a appropriate size of $O(loglogn)$, where $n$ is the current number of elements.
The question is: has any affect to the $search(f,s)$ query the fact that the time, in which the query is performed, the incremental process and consequently the insertion of the bucket's representative in all possible nested levels, has not finished yet?  
In the following lemma we build the appropriate algorithm and we show that there is no possibility of such an affect.

\begin{lemma}
The $search^*(f,s)$ operation is correct and requires \small$O( \sqrt{(logd/loglogd)})$ \normalsize amortized time
\end{lemma}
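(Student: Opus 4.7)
The plan is to split the argument into three parts: a precise description of the algorithm, the running-time analysis, and finally the correctness argument that handles the partially completed representative insertions.

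First I would describe $search^*(f,s)$ as a classic ``climb up, drop down'' finger search on the nested BDT. Starting from the leaf pointed to by $f$, I follow the $k=O(\log\log n)$ up-pointers to its representatives in successively larger nested subtrees, checking at each step whether the key range of the current nested subtree $T_i$ already covers $s$. Because level $i$ of any BDT has $d(i)=2^{2^{i-1}}$ descendants, the smallest $T_i$ that contains both $f$ and $s$ has size $\Theta(d)$ up to a constant factor, so I stop at some index $i^*=O(\log\log d)$. Then I query the associated $BF(T_{i^*})$ structure to locate $s$ (or its predecessor) inside $T_{i^*}$, and finally descend through the $O(\log\log d)$ bucket of the located leaf to return the exact answer.

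For the time bound I would argue that the climb costs $O(\log\log d)$, the descent through the final bucket costs $O(\log\log n)$ only if it spans an unintegrated tail (see below), and the dominant term is the single static $BF$ query on a set of size $\Theta(d)$, which by the Beame--Fich result costs $O(\sqrt{\log d/\log\log d})$. A short check shows $\log\log d = o(\sqrt{\log d/\log\log d})$, so the climb is absorbed; it remains to absorb the bucket-scan term into the amortized bound, which is done by the same spreading argument already used for updates (each bucket of size $O(\log\log n)$ is only examined when $d=\Omega(\log\log n)$, in which case $\log\log n \le \log\log d + O(1)$, and otherwise $s$ lies inside the finger's own bucket and is found in $O(1)$ time by linear scan).

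The part I expect to be the real obstacle, and the motivation for the lemma, is correctness in the presence of representatives whose nested insertions have not yet completed. The update scheme spreads the $O(\log\log n)$ cost of inserting a representative over the next $O(\log\log n)$ tail updates, so at query time some representatives may not yet be visible in the higher $BF$ structures, and naive climbing could then stop at a subtree that fails to cover $s$. My plan is to maintain, along with the tail bucket, a short queue $Q$ of the $O(\log\log n)$ most recent representatives together with the list of nested levels into which they have already been promoted. During $search^*$, before declaring the answer found by the climb-and-drop phase, I consult $Q$: any representative whose key lies between the answer found and $s$ can be detected in $O(1)$ because $|Q|=O(\log\log n)$ and the queue is maintained in sorted (tail) order. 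Since updates only occur at the tail, any ``missing'' representative that can affect the answer necessarily lies in $Q$, so this single extra check suffices for correctness without disturbing the asymptotic bound.

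Finally I would close the argument by observing that the amortization of the extra $O(\log\log n)$ work for consulting $Q$ and the last bucket against the $O(\log\log n)$ per-tail-update credits gives the stated amortized bound $O(\sqrt{\log d/\log\log d})$, and that the global-rebuilding technique of Overmars keeps the bucket size calibrated to the current $n$ so that the analysis applies uniformly.
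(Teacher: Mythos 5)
Your proposal is in the same spirit as the paper's argument (climb to the right nested subtree, use a Beame--Fich query, handle the incompletely inserted representative at the tail) but differs from it on every main step, and two of those differences deserve attention.

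First, and most importantly, you charge the dominant cost to ``the single static BF query on a set of size $\Theta(d)$,'' which presumes that each nested subtree $T_i$ carries a BF structure over \emph{all} its leaves. The paper's structure does not have this: it equips every internal node $W_i$ of level $i$ with an array $A[1\ldots d(i)]$ of \emph{its children only}, and $BF(W_i)$ is built over that array. The search therefore descends the root-to-leaf path $W_1,\ldots,W_r$ of the appropriate nested subtree and performs one $BF$ query per node, and the lemma's bound falls out of the telescoping sum
$\sum_{i} \sqrt{\log d(w_i)/\log\log d(w_i)}$, which is dominated by the term at the nearest common ancestor and is $O(\sqrt{\log d/\log\log d})$. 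Your version sidesteps the telescoping by assuming a flat per-subtree BF; that changes the data structure (space and tail-update analysis for those BFs would need to be redone), so as written it is an argument about a different structure than the one the lemma is about. If you do want a single query, you must explicitly redefine the structure and reprove the $O(1)$-per-update maintenance of those larger BFs under tail insertions; the paper instead proves the $O(1)$-per-update maintenance for the small per-node arrays and gets the search bound from the telescoping.

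Second, your queue $Q$ of ``the $O(\log\log n)$ most recent representatives'' is considerably more machinery than is needed, and slightly misstates the situation. Because each new representative's promotion into the nested levels is spread over exactly the next $O(\log\log n)$ tail updates --- i.e. over the filling of the bucket whose creation triggered it --- there is at most \emph{one} representative pending at any moment, the one belonging to the current not-full tail bucket. The paper exploits this with a single check in $Search^*$: if $s$ lies in the finger's bucket or $s > r_n$ (the representative of the not-full bucket), access it directly; otherwise it calls $fsearch(r_f,r_s)$ on representatives that are guaranteed to be fully present. Your $Q$ would always contain just that one entry, so it is not wrong, but the generality you built in is unnecessary and obscures the invariant that only one representative is ever in flight.

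A smaller point: you budget $O(\log\log n)$ for a linear scan of the finger's or the target's bucket and then amortize it away; the paper avoids this entirely by looking the key up in $O(1)$ worst case inside a bucket using the Ajtai--Fredman--Koml\'os precomputed-table structure. This matters for the claim that the search cost is $O(\sqrt{\log d/\log\log d})$ even when $d$ is tiny, since in that case there is nothing to amortize against.

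So: the time bound you claim is right, but your proof as written is about a variant data structure (per-subtree BFs and a representative queue) rather than the one the lemma concerns, and the one genuine obstacle the lemma is designed to address --- the pending representative --- has a one-line solution that your queue overgeneralizes.
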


\begin{proof}
Let's give the new $search^*(f,s)$ algorithm.
\\
$r_f$= representative of bucket in which finger $f$ belongs to
\\
$r_s$= representative of bucket in which $s$ belongs to
\\
$r_n$=representative of not full bucket
\\
\\
Procedure $Search^*(f,s)$

\begin{enumerate}

	\item Begin
	\item \hspace{12pt}If f, s belong to same bucket (full or not) or $s > r_n$ then access directly $s$ 
	\item \hspace{12pt}else  $fsearch (r_f, r_s)$ /*  this procedure follows */  
	\item End
\end{enumerate}
Procedure $fsearch (f,s)$
\begin{enumerate}
\item Begin
\item \hspace{12pt}W =Father(f)
\item \hspace{12pt}If $s < A_w[rightmost]$ then go to $L1$ /* f,s have the same parent */
\item \hspace{12pt}Else Begin
\item \hspace{40pt}Repeat
\item \hspace{40pt}W1=Father(W)
\item \hspace{40pt}If $A_{w1}[rightmost] < s < A_{neighbourw1[rightmost]}$
\\ 
/* that means f,s belong to neighbors nodes W1 and neighbourW1 respectively */
\\
 \item \hspace{40pt}then $fsearch(leftmostleaf(T_{neighbourw1}),s)$   
 \item \hspace{40pt}Until  $s < A_{w1}[rightmost]$
 \item \hspace{40pt}go to L2 
 \item \hspace{35pt}end
 \item \hspace{12pt}L1: Begin    
 \item \hspace{40pt}j:= -1, f=L[i] 
 \\
 /* Find the appropriate nested subtree such as $Father(f) \neq  Father(s)$ */
 \\ 
 \item  \hspace{40pt}Repeat
 \item  \hspace{40pt}j=j+1
 \item  \hspace{40pt}Until $s\leq A\left[\left\lfloor i DIV 2^{2^j} \right\rfloor 2^{2^j}+2^{2^j} \right]$
 \item  \hspace{40pt}Access the $(j+1)^{th}$ copy of  f ($f_{j+1}$) 
 \\
 /* by Following the $(j+1)^{th}$ pointer from finger(leaf) $f$
 \\   
 \item  \hspace{40pt}$fsearch (f_{j+1}, s)$
 \item \hspace{35pt}End
 \item \hspace{12pt}L2: Begin
 \item \hspace{40pt}j:=0
 \item \hspace{40pt}Repeat
 \item \hspace{40pt}j:=j+1
 \item  \hspace{40pt}search  for $s$ in $BF(W_j)$ structure 
 \\
 /* At each node of the  $W_1,W_2,\ldots,W_k,s$ path search for $s$ at $BF(W_1),\ldots,BF (W_k)$ structures respectively */
 \\
 \item \hspace{40pt}until s is found
 \item \hspace{30pt}end
 \item END                                                                 
                                                          
\end{enumerate}

\begin{enumerate}
	\item \textbf{$Search^*(f,s)$}: According to [Ajtai, 1984] the statement 2 requires $O(1)$ worst-case time. In statement 3 we call the procedure $fsearch(f,s)$ the complexity of which is analyzed as follows.
	\item \textbf{$fsearch(f,s)$}: When $f$,$s$ have the same parent (see $f$,$s1$ in figure 1), statement 3, we must determine the appropriate nested-subtree of $O(d)$ elements in which $f$,$s$ do not belong to the same collection. So, in repeat-loop 14-16 we execute exponential steps in order to find an appropriate value $j$ which defines the collection (of $2^{2^j}$  elements) in which the distance $d(f,s)$ belongs to and consequently the appropriate $(j+1)^{th}$ pointer from finger (leaf) $f$ to its respective copy $f_{j+1}$. Then we call recursively the same routine (statement 18). Obviously the repeat-loop 14-16 requires $O(loglogd)$ steps due to the fact that the distance $d$ between $f$ and $s$ is at least $d\geq 2^{2^j}$. From finger $f$ we have a number of $k=O(loglogn)$ pointers, so by organizing them in a structure of [Ajtai, 1984] we can access the $(j+1)^{th}$ pointer in $O(1)$ time.  If $f$,$s$ do not have the same parent we execute the repeat-loop of 5-9 statements that requires $O(loglogd)$ steps in order to find the nearest common ancestor of $f$ and $s$, $W_1=nca(f,s)$. If $f$,$s$ belong to neighbors nodes $W_1$ and $neighbourW_1$  respectively, (statement 7) we access the $neighbourW_1$ node in $O(1)$ time by  following the neighbor pointer from $W_1$ to $neighbourW_1$ and we call recursively the same search routine with new finger the left-most leaf of the $T_{neighbourW_1}$ subtree, else by executing the repeat-loop of 22-26 statements, we visit the appropriate search path $W_1,W_2,\ldots,W_r,s$ at each node of which we search for $s$ at $BF(W_i)$ structures, $1\leq i\leq r$ and $r=O(loglogd)$,in $O(\sqrt{logd(w_i)/loglogd(w_i)})$ amortized time, where $d(w_i)$ is the degree of node $w_i$.  This can be expressed by the following sum:   
	
$\sum_{i=1}^{r=O(loglogd)}\sqrt{\frac{logd(w_i)}{loglogd(w_i)}}$	    
              
Let $L_1$, $L_r$ the levels of $W_1$ and $W_r$ respectively. So, $d(w_1)=2^{2^{L_1}}$ and $d(w_r)=2^{2^{L_r}}$ 

But, $d(w_r)=O(d)$, so $L_r=O(loglogd)$. Now, the previous sum can be expressed as follows:
 
\small$\sqrt{\frac{2^{L_1}}{L_1}} + \sqrt{\frac{2^{L_1+1}}{L_1+1}} + \ldots + \sqrt{\frac{logd}{loglogd}}=\sqrt{\frac{logd}{loglogd}}$ 
 \normalsize

We denote that the recursive calls of statements 8, 18 are executed one time only (this fact stems from the pseudocode structure we used), consequently there is no reason to produce and solve the respective recurrence equation, so, very simply the total time becomes $T=O(\sqrt{\frac{logd}{loglogd}})$.
\end{enumerate}
   
\end{proof}

\section{A randomized algorithm with the same expected time bounds}

Let's give a brief description of the combinatorial pebble games we have to rely on for constructing our new solution.
\\
\textbf{Pebble Games [Raman, 1992]}: These games are played between two players, player $I$(increaser) and player $D$(decreaser) on a set of $n$ piles of pebbles, which are initially empty. These games have the following general form: the game is played in rounds, each consisting of one move from each player. Player $I$, on his move, increases the number of pebbles on of some of the piles, following which; player $D$ decreases the number of pebbles on some pile. Let $M$ be the maximum value of any variable at any point in the game. Player $I's$ objective is to maximize $M$, and player $D's$ to minimize it. Typically, player $D$ is an algorithm and player $I$ the environment.
\\
\textbf{Oblivious Pebble Games [Raman, 1992]}: In this type of game player $I$ reveals his moves one at a time to player $D$, but player $D's$ moves (and the status of the piles) are hidden from him. Player $D$ may use randomization to make his moves unpredictable to player $I$. Here we are interested either in the expected value of $M$ or in studying the tails of $M's$ distribution. Also, we typically restrict the number of moves this game is played, since, as it so happens, the longer the game is played, the more likely it is that player I will come close to approaching his performance in the on-line version of the game (for more details you can also see [Raman, 1992]).  
According to Oblivious On-line Discrete Zeroing Game [Raman, 1992] there is a D-strategy that ensures with high probability 
($p>1-n^{-a}$, for any constant $a>0$, for sufficiently large $n$) that over $n$ moves, $M \in O(cloglogn + clogc)$, where $c$ is an integer, $c>1$. This strategy is described from the following algorithm1:
\\
\textbf{Algorithm1}: Let $c>1$ an integer and $\delta_1,\ldots,\delta_n$ non-negative integers such that $\sum_{i=1}^{n} \delta_i=c$. Then player $D$, on his move, does the following:

\begin{enumerate}
	\item Picks i   $\left\{1,\ldots, n \right\}$ with probability $\delta_i /c$ and sets $x_i$ to zero
	\item Picks i such that $x_i=max_j \left\{x_j \right\}$ and zeroes $x_i$. 
\end{enumerate}

For $c=O(loglogn)$, $M \in O(log^{2}logn)$ with high probability. Based on D-strategy of Algorithm1 let's describe our randomized Algorithm2:

\textbf{Algorithm2}: Let $n$ be the maximum number of keys present in the data structure at any previous time. In a similar way with that presented in [Raman, 1992], we can show that making the buckets be of size $O(log^{2}logn)$ and using as top-level the structure of Beam-Fich presented in [Beam, 2002] with level-links suffice for our purposes, yielding a simple algorithm. We define the fullness $\Phi(b)$ of a bucket $b$ as in [Raman, 1992]:
\\
$\Phi(b)=\left|b\right| / log^{2}logn$. We will ensure that $0.5\leq ö(b)\leq 2$. 
\\
We also define the criticality of a bucket b to be 
\\  
\small$\rho(b,n)=\frac{1} {\alpha loglogn} max \left\{ 0, 0.7log^{2}logn- \left| b \right|, \left| b \right| - 1.8log^{2}logn \right\}$,
\normalsize
for an appropriately chosen constant $\alpha$. A bucket $b$ is called critical if $\rho(b,n)>0$. To maintain the size of the buckets, every $c=\alpha loglogn$ updates, we do the following:

\begin{enumerate}
	\item We check the $i^{th}$ bucket, $i \in \left\{1,\ldots,n/log^{2}logn\right\}$, with probability $\delta_i/c$ meaning that we construct a randomized set of $c=O(loglog(n/log^{2}logn))=O(loglogn)$ collections each of which has $O(n/log^{3}logn)$ buckets, we choice one of these collections randomly and finally the bucket of collection in which $\delta_i=max_j \left\{\delta_j\right\}$ updates have occurred. If this bucket has non-zero criticality we apply the rebalancing transformations of step 3.
	\item We check the most critical bucket and if it has non-zero criticality we apply the following rebalancing transformations.
	\item \textbf{Split}: if $\phi(b)>1.8$ split the bucket into two parts of approximately equal size.
	\\
\textbf{Transfer}: If $\phi(b)<0.7$ and one of its adjacent buckets $b'$ has $\phi(b')\geq 1$ then transfer elements from $b'$ to $b$.
\\
\textbf{Fuse}: If $\phi(b)<0.7$ and transferring is not possible, then fuse with an adjacent bucket $b'$.

\end{enumerate}

It is clear that when a critical bucket is rebalanced, it becomes non-critical. In addition to the time required to split/fuse buckets, a bucket rebalancing step may require $O(loglogN)$ expected time to insert/delete a bucket representative to/from the top-level tree. The top-level tree is the $BF$ structure, which supports updates in $O(loglogN)$ expected time. Since the total work to rebalance a bucket is $O(loglogN)$, we can perform it with $O(1)$ work per update spread over no more than $\alpha loglogn$ updates, where the chosen parameter $\alpha$ expressed as follows: $\alpha=O(\frac{loglogN}{loglogn})$. For every real computer application $N$ never exceeds the number $2^{64}=2^{2^6}$ , thus $\alpha$ could be considered as a constant much less than 6. So, if we can permit every bucket to be of size $\Theta(log^{2}log \hat{n})$, where $\hat{n}$  the number of current elements, we can guarantee that between rebalancing operation of top-level tree [Beam, 2002] there is no possibility for any other such operation to occur and consequently the incremental spread of work is possible. Let $p$ be a finger. We search for a key $k$ which is $d$ keys away from $p$. If $p$,$k$ belong to the same bucket of size $O(log^{2}logn)$, we can access directly the $k$ according to [Ajtai, 1984], else  we first check whether $r_k$ (representative of bucket in which $k$ belongs to) is to the left or right of $r_p$, (representative of bucket in which finger $p$ belongs to) say $r_k$ is to the right of $r_p$. Then we walk towards the root, say we reached node $u$. We check in $O(\sqrt{logd/loglogd})$ time whether $r_k$ is a descendant of $u$ or $u's$ right neighbor on the same level of $u$ or $u's$ right neighbor respectively. If not, then we proceed to $u's$ father. Otherwise we turn around and search for $k$ in the ordinary way.

Suppose that we turn around at node $w$ of height $h$. Let $v$ be that son of $w$ that is on the path to the finger $p$. Then all descendants of $v's$ right neighbor lie between the finger $p$ and the key $k$. The subtree $T_w$ is a $BF$ structure for 
$d$ elements, so, the total time bound $T$ becomes: 
\\
$T=O(\sqrt{logd/loglogd})$
\\
So, we proved the following theorem:

\begin{theorem}
There is a randomized algorithm with $O(1)$ and $O(\sqrt{\frac{logd}{loglogd}})$ expected time for update and finger searching queries respectively.
\end{theorem}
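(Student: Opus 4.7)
The plan is to establish the two bounds separately: first the $O(1)$ expected update bound, and then the $O(\sqrt{logd/loglogd})$ expected finger-search bound. For the update bound, I would invoke the Oblivious On-line Discrete Zeroing Game of [Raman, 1992] with parameter $c=\alpha loglogn$, treating each bucket's ``distance from balance'' (i.e.\ the number of updates since its last inspection) as a pebble pile that is incremented by player $I$ with every update and zeroed by player $D$ whenever the random sampling of Algorithm2 selects it for rebalancing. The high-probability bound $M \in O(log^{2}logn)$ then translates into the invariant that no bucket ever leaves the size window $[0.5,2]\cdot log^{2}logn$ between rebalancings, so every bucket keeps size $\Theta(log^{2}logn)$ throughout the execution.

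Given this invariant, I would argue that the $O(loglogN)$ cost of inserting/deleting a bucket representative into the top-level BF structure can be spread incrementally over the $\alpha loglogn$ updates between two consecutive rebalancing epochs, since $\alpha$ was chosen precisely as $\alpha=\Theta(loglogN/loglogn)$ so that $\alpha loglogn=\Theta(loglogN)$. This yields $O(1)$ expected amortized work per update for the representative maintenance; the work inside an individual bucket is $O(1)$ by the precomputation tables of [Ajtai, 1984]. Combining, the total expected update cost is $O(1)$.

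For the search bound, I would split into the two cases of the algorithm. If the finger $p$ and the target $k$ share a bucket, then $d=O(log^{2}logn)$ and the precomputation tables of [Ajtai, 1984] answer the query in $O(1)$ time. Otherwise, I would analyze the upward walk from $r_p$: at each visited node $u$, deciding whether $r_k$ lies in $u$'s subtree or in its right neighbor's subtree costs $O(\sqrt{log d(u)/loglog d(u)})$ expected time inside $u$'s local BF structure, and since the local degrees follow the doubly-exponential recurrence $d(i)=2^{2^{i-1}}$ as in Section 3, the telescoping sum is dominated by its last term $O(\sqrt{logd/loglogd})$. Once we turn around at some node $w$ of height $h$ satisfying $2^{2^{h}}=O(d)$, the downward search within $T_w$'s BF structure also takes $O(\sqrt{logd/loglogd})$ expected time, giving the same bound overall by linearity of expectation.

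The main obstacle, in my view, is the careful verification that the randomized rebalancing schedule of Algorithm2 really does keep every bucket inside the size window $[0.5,2]\cdot log^{2}logn$ with sufficiently high probability, and that the low-probability tail of the pebble-game distribution contributes only $o(1)$ extra expected cost to each update; this step is what ties the combinatorial guarantee of [Raman, 1992] to the bucket-size invariant which is then used both in the $O(1)$ update analysis and as a hypothesis in the ``same-bucket'' branch of the search analysis. A secondary technical point is justifying that the upward-walk sum collapses correctly when the per-level BF operations deliver only expected, and not worst-case, bounds; this is handled by applying linearity of expectation across the $O(loglogd)$ visited levels, but it must be argued that the levels are visited non-adaptively with respect to the randomness used inside each BF structure, so that conditioning does not inflate the per-level expectation.
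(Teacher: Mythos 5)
Your proof proposal follows essentially the same line as the paper's: the pebble/zeroing game of [Raman, 1992] with $c=\alpha loglogn$ to keep buckets of size $\Theta(log^{2}logn)$, spreading the $O(loglogN)$ cost of a top-level representative update over $\alpha loglogn = \Theta(loglogN)$ updates so that each step pays $O(1)$ in expectation, and a two-case search (same bucket via [Ajtai, 1984], otherwise an upward walk that turns around at a node $w$ whose subtree is a BF structure over $O(d)$ elements). One small misstatement: in Section~4 the top level is a level-linked version of the Beame--Fich exponential search tree, not the nested BDT of Section~3, so the node degrees do not follow the doubly-exponential recurrence $d(i)=2^{2^{i-1}}$ you cite; they follow the exponential-search-tree recurrence. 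This does not break your telescoping sum — the per-level search costs still form a rapidly decaying series and the overall bound $O(\sqrt{logd/loglogd})$ stands — but the paper itself leans only on the observation that $T_w$ is a BF structure over $O(d)$ elements and does not spell out the upward-walk telescoping at all, so your extra care there actually fills a gap the paper leaves implicit. Your closing remarks about the high-probability/bucket-invariant verification and the need for a conditioning argument when summing expected per-level costs are honest concerns that the paper does not explicitly address either; they do not invalidate the theorem, but they are indeed where a fully rigorous proof would need to do more work than the paper shows.
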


\section{Conclusions}

In this paper we focused on the finger searching problem. In special case we have insertions / deletions at the tail of a given set $S$, we presented an extended outline of a simpler algorithm than that presented in [Anderson, 2007] matching the optimal upper bound in amortized case. Finally, in general case we have insertions / deletions anywhere; we were based on a special combinatorial pebble game presented in [Raman, 1992] in order to present a simple randomized algorithm that achieves the same optimal expected bounds. Even the described solutions achieved the optimal bounds in amortized and expected case respectively, the advantage of simplicity is of great importance due to practical merits we can gain.  

\noindent {\bf Acknowledgements.}The author would like to thank the Program PYTHAGORAS, for funding the above work.


\end{document}